\documentclass[10pt,twocolumn]{article}

\usepackage[letterpaper,margin=0.72in,columnsep=0.25in]{geometry}
\usepackage[T1]{fontenc}
\usepackage[utf8]{inputenc}
\usepackage{microtype}
\usepackage[hyphens]{url}
\usepackage{graphicx}
\usepackage{natbib}
\usepackage{caption}
\usepackage{algorithm}
\usepackage{algorithmic}
\usepackage{booktabs}
\usepackage{amsmath}
\usepackage{amssymb}
\usepackage{amsthm}
\usepackage{multirow}
\usepackage{listings}
\usepackage{xcolor}
\usepackage{array}
\usepackage{enumitem}
\usepackage{placeins}
\usepackage[hidelinks]{hyperref}

\newtheorem{lemma}{Lemma}
\newtheorem{proposition}{Proposition}
\lstdefinestyle{paperprompt}{
  basicstyle=\ttfamily\scriptsize,
  breaklines=true,
  breakatwhitespace=false,
  columns=fullflexible,
  keepspaces=true,
  showstringspaces=false,
  frame=single,
  rulecolor=\color{black!25},
  backgroundcolor=\color{black!2},
  xleftmargin=2pt,
  xrightmargin=2pt,
  aboveskip=4pt,
  belowskip=4pt
}
\lstdefinestyle{papersmall}{
  style=paperprompt,
  basicstyle=\ttfamily\tiny
}

\title{IR2Solve: Structured Intermediate Representations for\\
Cost-Efficient Optimization Autoformulation}
\author{%
  \textbf{Penglin Zhu} \quad
  \textbf{Linhai Zhang} \quad
  \textbf{Jungang Xu}\thanks{Corresponding author: \texttt{xujg@ucas.ac.cn}.} \quad
  \textbf{Xinchi Wei} \quad
  \textbf{Xiuqi Wu}\\[0.6em]
  School of Computer Science and Technology, University of the Chinese Academy of Sciences\\
  \texttt{\{zhupenglin21, zhanglinhai24, weixinchi22, wuxiuqi22\}@mails.ucas.ac.cn}\\
  \texttt{xujg@ucas.ac.cn}
}

\date{}

\hypersetup{
  pdftitle={IR2Solve: Structured Intermediate Representations for Cost-Efficient Optimization Autoformulation},
  pdfauthor={Penglin Zhu, Linhai Zhang, Jungang Xu, Xinchi Wei, Xiuqi Wu},
  pdfsubject={Large-language-model-based optimization autoformulation},
  pdfkeywords={optimization autoformulation, intermediate representation, large language models, deterministic verification}
}

\begin{document}
\maketitle

\begin{abstract}
Large language models (LLMs) can translate natural-language optimization
problems into solver-ready formulations, but direct code generation is brittle:
schema, indexing, and semantic errors can cause compilation failures,
infeasible models, or incorrect objectives, while iterative repair, search, and
multi-agent workflows increase inference cost. We present IR2Solve, an
intermediate-representation-first autoformulation pipeline that uses a single
semantic LLM call to produce a schema-constrained ModelIR, followed by two
deterministic stages: verification and IR-to-solver compilation. ModelIR
explicitly represents sets, parameters, variables, objectives, and constraints
using restricted Python-like expression strings. A concrete scalar-constraint
convention represents finite per-index constraint families as individual
entries, reducing free-index and implicit-quantification errors while
simplifying downstream verification and compilation. Across six cleaned
optimization benchmarks, IR2Solve achieves strong objective correctness and
remains competitive with recent optimization-modeling systems. A controlled
ablation on 153 IndustryOR and ComplexLP instances shows sequential gains from
the structured IR interface, the scalar-constraint instruction, and
deterministic verification. On a matched ten-instance cost panel, IR2Solve
uses one semantic call per instance, whereas Chain-of-Experts and SAC-Opt use
8 and 39 calls per instance and consume $3.3\times$ and $22.9\times$ the token
volume of IR2Solve, respectively. These results show that structured
intermediate representations, combined with deterministic post-generation
processing, provide a practical accuracy--cost trade-off for LLM-based
optimization autoformulation.
\end{abstract}

\section{Introduction}
\label{sec:intro}

Optimization models are central tools in operations research for complex
decision-making in healthcare, transportation, and scientific
discovery~\citep{delgado2022equity,jiao2024city,bartlett2017integer}.
The standard workflow has two stages: first, abstract the problem into a formal
optimization model; then, apply an algorithmic solver. While the second stage is
largely automated, the first---translating a real-world description into the
appropriate variables, objective, and constraints---remains a major bottleneck in
applied operations research.

Automatic modeling aims to reduce this burden by generating solver-ready models
from natural-language descriptions. Such interfaces can make optimization more
accessible to practitioners without specialized modeling training and allow
experienced modelers to spend less effort on routine formulation. Formally,
autoformulation maps an unstructured language description to a formal,
solver-executable model.

LLMs are attractive for this task because of their semantic and reasoning
capabilities~\citep{patil2024review,wei2022chain,yao2023tree,besta2024graph},
but their outputs also exhibit hallucinations, format inconsistencies, and
mathematical inaccuracies. These properties conflict with the precision and
logical consistency required by solver models. Existing systems improve
accuracy through prompt decomposition, multiple agents, reflection, search, or
iterative correction~\citep{xiao2023chain,ahmaditeshnizi2024optimus,astorga2024autoformulation,sacopt2026},
but this often increases API calls, token consumption, and workflow complexity.
The resulting accuracy--cost trade-off can hinder large-scale deployment.

To address this accuracy-cost trade-off, IR2Solve introduces an intermediate representation (IR) as a structured target
for LLM generation. A single semantic call maps the problem description into a
schema-constrained ModelIR. A deterministic verifier then applies a fixed set of
conservative rewrites, after which a deterministic compiler constructs the
Gurobi model. Mathematical expressions are stored as restricted Python-like
strings rather than complete programs. ModelIR also represents per-index
constraint families as concrete scalar entries, reducing free-index and
implicit-quantification errors and simplifying downstream checking. The final
pipeline uses one semantic LLM call and no iterative generation or LLM-based
rescue.

Our main contributions are:
\begin{enumerate}
\item We formulate optimization autoformulation through an explicit
representation layer and introduce ModelIR, which constrains open-ended code
generation into a structured, solver-oriented artifact.
\item We present IR2Solve, a practical one-call pipeline combining ModelIR with
a deterministic verifier and deterministic IR-to-solver compilation. A concrete
scalar-constraint convention is used to reduce index-scope ambiguity, while its
length and scaling limitations are stated explicitly.
\item We provide an evaluation on six cleaned benchmarks, a full
153-instance ablation of the structured IR interface, scalar-constraint
instruction, and verifier, and matched inference-cost profiling against
representative one-call, multi-agent, and iterative systems.
\end{enumerate}

The rest of the paper reviews related work, formalizes the task and its
challenges, presents IR2Solve, and evaluates accuracy, ablations, failure modes,
and inference cost. Full prompts, schema details, verifier rules, and
implementation notes are provided in the appendix.

\section{Related Work}
\label{sec:related}

\subsection{Code-First Autoformulation}
The most direct approach asks an LLM to generate complete solver code from a
problem description. Standard prompting and chain-of-thought provide natural
baselines~\citep{wei2022chain}; CAFA~\citep{deng24cafa} explicitly adopts coding
as the autoformulation interface and uses one generation followed by
deterministic cleaning. This design is inexpensive, but it couples mathematical
modeling with imports, solver APIs, indexing, status handling, and program
execution. IR2Solve instead makes the mathematical formulation an explicit
intermediate object before solver-specific construction. Our R0 condition is a
controlled one-call code-generation ablation, while CAFA is retained as a
stronger external code-first baseline.

\subsection{Multi-Agent and Reflective Repair}
Chain-of-Experts~\citep{xiao2023chain} decomposes formulation across
role-specialized agents coordinated by a conductor, with reflection. OptiMUS
\citep{ahmaditeshnizi2024optimus} organizes extraction, modeling, coding,
testing, and refinement in a modular workflow. SAC-Opt~\citep{sacopt2026}
reconstructs semantic anchors from generated code and iteratively corrects
misaligned objectives and constraints. These methods create additional
opportunities to revise errors, but require multiple calls and intermediate
interfaces. IR2Solve follows a non-iterative one-call generation path and uses
only deterministic verification after ModelIR is produced.

\subsection{Search and Execution-Aware Systems}
Autoformulation with MCTS~\citep{astorga2024autoformulation} treats modeling as
a search problem over many candidate formulations, using pruning and evaluation
to select a solution. NEMO~\citep{song2026nemo} represents a more recent
execution-aware direction with coding agents, sandboxed execution,
simulator--optimizer validation, memory, MBR decoding, and self-consistency.
Such systems demonstrate the value of inference-time scaling and solver-grounded
feedback. Our focus is complementary: we study how much reliability can be
obtained by changing the representation interface while retaining a one-call semantic workflow.

\subsection{Fine-Tuned OR Models and Benchmark Curation}
Fine-tuned systems such as ORLM~\citep{huang2025orlm}, LLMOPT
\citep{jiang2024llmopt}, and evolutionary or stepwise training approaches
\citep{wuevo} internalize optimization structure through curated data and
training. They can achieve strong accuracy but shift cost toward data synthesis
and model adaptation. Because noisy benchmark labels can materially change
conclusions, we use the cleaned releases and baseline values curated by the
recent survey~\citep{xiao2025survey}, covering NL4Opt
\citep{ramamonjison2023nl4opt}, MAMO EasyLP/ComplexLP
\citep{huang2024mamo}, NLP4LP from the OptiMUS lineage, IndustryOR, and
ReSocratic~\citep{yang2024optibench}.

\section{Problem Setup and Challenges}
\label{sec:problem}

\subsection{Optimization Modeling Preliminaries}
We consider linear and mixed-integer linear optimization problems over decision
variables $x$, instance data $\theta$, an objective, and inequality/equality
constraints:
\begin{equation}
\label{eq:opt_generic}
\begin{aligned}
\min_{x\in\mathcal X}\quad & f(x;\theta) \\
\text{s.t.}\quad & g_i(x;\theta)\le 0,\quad i=1,\ldots,I,\\
& h_j(x;\theta)=0,\quad j=1,\ldots,J,
\end{aligned}
\end{equation}
where $\mathcal X\subseteq\mathbb R^\ell\times\mathbb Z^k$ encodes continuous
and discrete variable domains. A formulation is a formal instance of
Eq.~\eqref{eq:opt_generic}.

\subsection{Task Setup and Representation Layer}
Let $d\in\mathcal D$ denote a natural-language description,
$m\in\mathcal M$ a mathematical formulation, and $c\in\mathcal C$ an
executable solver instance. Autoformulation can be factored as
\begin{equation}
 d \longrightarrow m \longrightarrow c.
 \label{eq:factor}
\end{equation}
The first map interprets the problem semantics; the second realizes the model in
a solver. Code-first systems collapse both into one LLM output. IR2Solve makes
$m$ an explicit artifact and implements the second map deterministically. A
useful stochastic view writes the two mappings as $p_\phi(m\mid d)$ and
$p_\psi(c\mid m)$, inducing
\begin{equation}
 p_{\phi,\psi}(m,c\mid d)=p_\psi(c\mid m)\,p_\phi(m\mid d).
 \label{eq:joint}
\end{equation}
This decomposition distinguishes modeling uncertainty from implementation
uncertainty. Direct code generation exposes both to one stochastic sample,
whereas IR2Solve reserves stochasticity for $d\to m$ and uses fixed programs
for $m\to c$.

An LLM does not directly emit an abstract $m$; it emits an artifact in a
representation language $L$. Let $\mathcal M_L\subseteq\mathcal M$ denote the
models expressible in $L$. The representation determines which artifacts are
admissible, what can be validated before execution, and how reliably they can be
compiled. Restricting generation to $\mathcal M_L$ can remove irrelevant
program structures and solver-API choices, but only under a coverage assumption:
the language must still express a correct formulation for the target instance.
We therefore use representation restriction as a design motivation, not as a
claim that ModelIR covers every optimization formalism. The current schema is
aimed at finite LP/MILP families represented in the benchmark suite; broader
coverage remains an empirical and engineering question. A conditional
probability statement and its proof are given in the appendix.

\subsection{Evaluation and Efficiency Metrics}
We report three execution-level outcomes. \textbf{Build} indicates that the
solver instance can be constructed without runtime errors. \textbf{Solved}
indicates that the solver returns a feasible or optimal result within the time
limit. \textbf{Objective correctness} requires a solved model whose objective $z$
matches the reference $z^\star$ under
$|z-z^\star|\leq 10^{-4}+10^{-6}|z^\star|$. This criterion is reproducible
but does not prove full semantic
equivalence: two different feasible regions can share the same optimum.

Efficiency is measured through semantic LLM calls, input/output/total tokens,
wall-clock latency, and API cost where available. Calls and tokens are the most
portable indicators across systems, while latency additionally depends on
provider and hardware conditions.

\subsection{Challenges}
Three challenges motivate the pipeline. \textbf{(C1) Large hypothesis space:}
natural-language descriptions admit many plausible variable definitions,
indexing choices, and constraint interpretations. \textbf{(C2) Fragile
executability:} even conceptually plausible models can fail because of schema,
shape, indexing, or solver-code errors. \textbf{(C3) Accuracy--cost trade-off:}
repeated sampling, reflection, search, and multi-agent coordination can improve
accuracy but increase calls and tokens. IR2Solve addresses these challenges by
constraining the generated artifact and reserving subsequent correction for a
small deterministic rule set.

\section{Methodology}
\label{sec:method}

\begin{figure*}[t]
\centering
\includegraphics[width=0.98\textwidth]{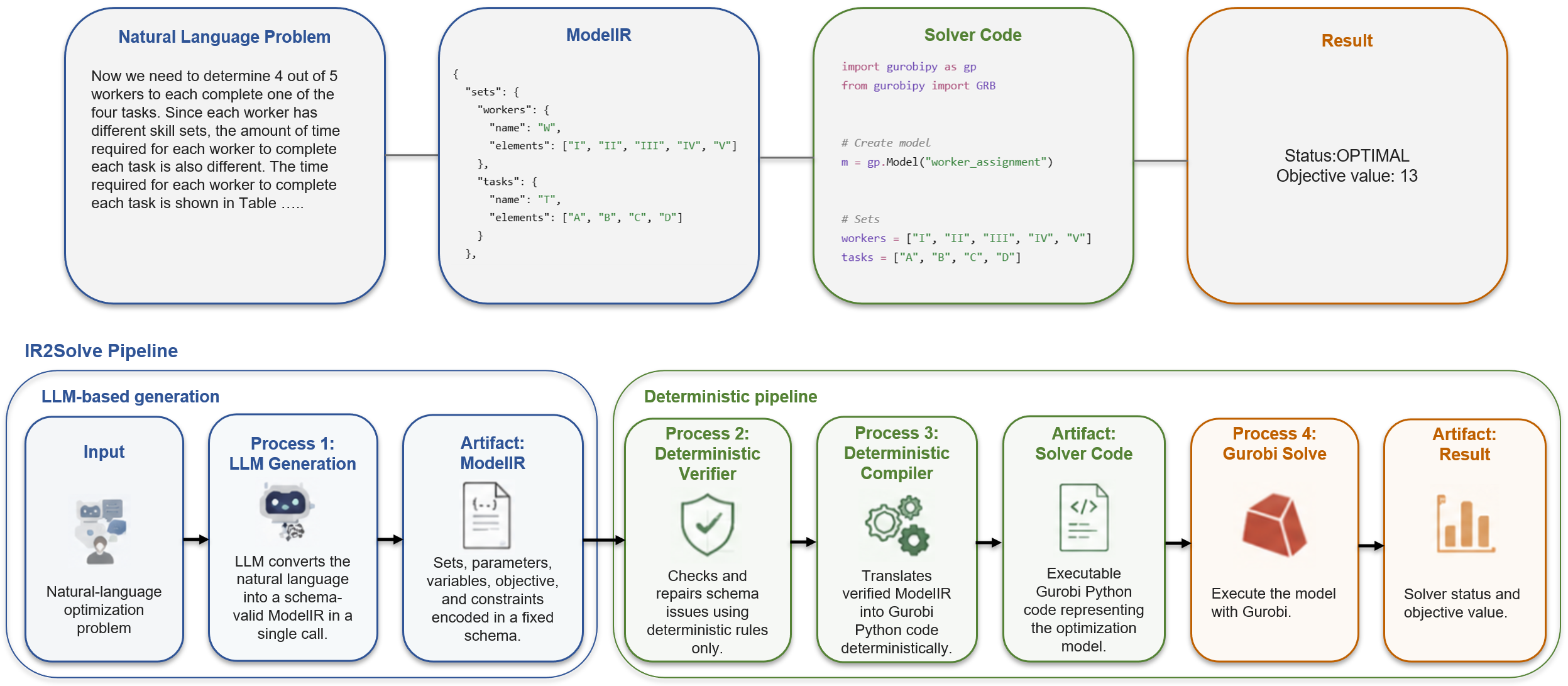}
\caption{IR2Solve separates one-call ModelIR generation from deterministic
verification, compilation, and solving. ModelIR explicitly represents sets,
parameters, variables, the objective, and scalar constraints. The verifier
applies fixed guarded rewrites, and the compiler constructs the Gurobi model
without additional semantic calls.}
\label{fig:pipeline}
\end{figure*}

\subsection{Overview}
Given a description $d$, IR2Solve first generates a formulation $m$ in ModelIR,
applies a deterministic verifier, and compiles the verified IR into a Gurobi
model $c$ (Figure~\ref{fig:pipeline}). This follows the factorization in
Eq.~\eqref{eq:factor}: the LLM is used for language-to-model interpretation,
whereas model-to-solver realization is deterministic. No candidate resampling,
reflection, critic, or semantic repair call is used in the final system.

\subsection{ModelIR}
ModelIR is a schema-constrained JSON representation with typed fields for
\texttt{sets}, \texttt{params}, \texttt{vars}, \texttt{objective}, and
\texttt{constraints}. Variables declare index sets, type in
$\{\text{continuous},\text{integer},\text{binary}\}$, and bounds. Objective and
constraint relations are stored as restricted Python-like expression strings:
mathematical strings containing arithmetic, declared parameter/variable
references, explicit indexing, and bounded \texttt{sum}/\texttt{quicksum}.
They contain no imports, model initialization, solver calls, control flow, or
result-handling code. During generation, the model is instructed to output JSON
only, use declared names consistently, and express the objective and constraints
through the permitted expression fragment. The full schema, prompt templates,
and a complete worked example are in the appendix.

This separation is deliberately modest: the LLM still performs the hard
semantic work of identifying variables, domains, objective, and constraints,
but it no longer has to synthesize the surrounding solver program. Schema
validation also makes failures observable before compilation rather than
surfacing only as runtime exceptions. Table~\ref{tab:modelir-schema} summarizes
these fields.

\begin{table}[t]
\centering
\scriptsize
\setlength{\tabcolsep}{3.1pt}
\begin{tabular}{p{0.23\columnwidth}p{0.66\columnwidth}}
\toprule
Field & Key contents \\
\midrule
\texttt{sets} & names and explicit finite elements \\
\texttt{params} & names, index signatures, and numeric values \\
\texttt{vars} & names, indices, variable types, and bounds \\
\texttt{objective} & minimization/maximization sense and expression string \\
\texttt{constraints} & named left expression, relation, and right expression \\
\bottomrule
\end{tabular}
\caption{Summary of the ModelIR schema.}
\label{tab:modelir-schema}
\end{table}

\paragraph{Constraint granularity.}
Each constraint entry is required to denote one concrete scalar relation. Free
indices and implicit ``for all'' semantics are disallowed, and a finite
per-index family is written as separate entries. For a finite set, this explicit
form is mathematically equivalent to the corresponding universally indexed
family. Operationally, the convention avoids index-scope inference during
compilation and supports direct constraint-level inspection. It can, however,
increase output length as indexed families grow, so we treat it as a practical
design choice for the targeted benchmark scale rather than a universally optimal
representation. The formal finite-set equivalence and its limitations are given
in the appendix.

\subsection{Deterministic Verifier}
The verifier applies a fixed sequence of conservative, guarded rewrites and
invokes no LLM. Its active functions can be summarized in three groups:
(i) canonicalize names, index keys, and common parameter shapes; (ii) resolve
recoverable index-scope and aggregation inconsistencies that would prevent
compilation; and (iii) apply a small set of unambiguous variable-domain and
constraint-direction sanity checks. Shallow syntax cases are handled by the
same sequence rather than exposed as independent modules. Verifier rule
families and implementation notes are provided in the appendix.

Let $T_1,\ldots,T_K$ denote the fixed guarded transformations. The verifier is
the deterministic composition
\begin{equation}
\widehat m=\mathcal V(m)=T_K\circ T_{K-1}\circ\cdots\circ T_1(m),
\label{eq:verifier-transform}
\end{equation}
where a transformation leaves its input unchanged when its guard is not
satisfied. Equation~\eqref{eq:verifier-transform} describes the implementation
directly: the verifier does not define a probability distribution, resample
candidates, or optimize a latent violation score. In the ablation, the verifier
is applied to the exact pre-verifier ModelIR used by the corresponding no-verifier
condition, so the comparison adds no model call.

\subsection{Deterministic Compilation}
The compiler instantiates declared sets, parameters, and variables; evaluates
restricted expressions in an environment containing only declared objects and
allowed aggregators; sets the objective; and emits solver constraints. Given the
same finalized IR and compiler version, the resulting Gurobi model is fixed.
This avoids a second stochastic IR-to-code stage and allows each emitted
constraint to be traced to its IR entry. Determinism does not guarantee that the
upstream model is semantically correct, but it preserves the finalized
representation, supports exact replay, and removes compiler sampling as an
additional source of variation. The current ablation does not isolate the
compiler against an LLM compiler; its contribution is therefore described as an
architectural property rather than an empirical accuracy delta.

\begin{algorithm}[t]
\caption{IR2Solve (one semantic call)}
\label{alg:ir2solve}
\textbf{Input}: description $d$ \\
\textbf{Output}: status and objective
\begin{algorithmic}[1]
\STATE $ir \gets \textsc{GenerateIR}(d)$ \hfill// one semantic call
\STATE $ir \gets \textsc{Verify}(ir)$ \hfill// deterministic
\STATE $c \gets \textsc{CompileToGurobi}(ir)$ \hfill// deterministic
\STATE \textbf{return} $\textsc{Solve}(c)$
\end{algorithmic}
\end{algorithm}

\section{Experiments}
\label{sec:exp}

We first report the main comparison on six cleaned benchmarks, then present a
controlled ablation on IndustryOR and ComplexLP, followed by cost and failure
analyses.

\subsection{Experimental Setup}
\label{sec:exp:setup}
\paragraph{Datasets.}
We use the survey-cleaned releases~\citep{xiao2025survey}: NL4Opt (214
instances)~\citep{ramamonjison2023nl4opt}, IndustryOR (42), EasyLP (545) and
ComplexLP (111) from MAMO~\citep{huang2024mamo}, NLP4LP (178) from the OptiMUS
lineage~\citep{ahmaditeshnizi2024optimus}, and ReSocratic (403)
\citep{yang2024optibench}. The ablation uses the complete IndustryOR and
ComplexLP sets because they contain frequent indexing, parameter-shape, and
semantic failures.

\paragraph{Configuration and metrics.}
The final local configuration uses GPT-4o-2024-08-06 with temperature 0 for a
single NL$\to$IR call and Gurobi with a 60-second per-instance limit. There is
no semantic retry, candidate selection, or LLM repair call. Reference objectives
are used only for retrospective evaluation. We report Build, Solved, and
objective-correct rates under the tolerances in Section~\ref{sec:problem}.
Baseline rows copied from other studies are marked by source and protocol.
Survey values use the same cleaned releases and report a standardized accuracy
metric, but the survey does not specify every decoding and solver parameter.
OptiMUS-0.3 and SAC-Opt values are five-run means from SAC-Opt's unified
evaluation. The appendix records the dataset, protocol, and
comparability details used in each table.

\subsection{Cleaned Benchmark Comparison}
\label{sec:exp:bench}
\begin{table*}[t]
\centering
\scriptsize
\setlength{\tabcolsep}{5.0pt}
\begin{tabular}{lcccccc}
\toprule
Methods & NL4Opt & IndustryOR & EasyLP & ComplexLP & NLP4LP & ReSocratic \\
\midrule
ORLM-LLaMA-3 8B$^{\ast}$ & 73.8\% & 42.9\% & 90.4\% & 59.5\% & 76.4\% & 61.8\% \\
Standard$^{\ast}$ & 61.2\% & 38.1\% & 70.3\% & 57.7\% & 73.6\% & 48.4\% \\
CoT$^{\ast}$ & 62.2\% & 40.5\% & 49.5\% & 42.3\% & 74.7\% & 43.6\% \\
Chain-of-Experts$^{\ast}$ & 66.7\% & 31.2\% & 94.4\% & 50.6\% & 87.4\% & 71.2\% \\
CAFA$^{\ast}$ & 68.1\% & 41.1\% & 71.2\% & 44.5\% & 50.0\% & 40.1\% \\
\midrule
OptiMUS-0.3$^{\dagger}$ & 79.8\% & 54.3\% & 92.4\% & 52.1\% & 89.8\% & 81.0\% \\
SAC-Opt$^{\dagger}$ & \textbf{86.8\%} & \underline{63.8\%} & \underline{96.5\%} & \textbf{79.6\%} & \textbf{94.0\%} & \textbf{88.7\%} \\
\midrule
\textbf{IR2Solve}$^{\ddagger}$ & \underline{86.4\%} & \textbf{64.3\%} & \textbf{97.4\%} & \underline{70.3\%} & \underline{90.4\%} & \underline{86.6\%} \\
\bottomrule
\end{tabular}
\caption{Objective-correct accuracy on the six survey-cleaned datasets.
$^{\ast}$Values reported by the cleaned-benchmark survey; ORLM is fine-tuned and
the other rows are prompting/agent baselines. $^{\dagger}$Five-run means
reported by SAC-Opt under its unified pipeline on the same cleaned dataset
releases; this protocol is not identical to the survey or our local evaluation.
$^{\ddagger}$IR2Solve uses the frozen final one-call configuration; four datasets
come from the 1,340-instance refresh, while IndustryOR and ComplexLP retain
matching final-configuration evaluations. Bold marks the best reported value
in each column, and underlining marks the second-highest value. The table is a
scoped clean-split comparison, not a fully controlled head-to-head across
backbones and workflows.}
\label{tab:benchmark}
\end{table*}

Table~\ref{tab:benchmark} reports the comparison on the six cleaned datasets.
IR2Solve is higher than the survey rows on all six columns. Relative to the
recent SAC-Opt report, IR2Solve has higher reported accuracy on IndustryOR and
EasyLP, while SAC-Opt is higher on NL4Opt, ComplexLP, NLP4LP, and ReSocratic.
This supports a scoped conclusion: a one-call IR-first system achieves
competitive accuracy on the cleaned splits without iterative semantic
correction. It is not a
fully controlled superiority claim because the survey, SAC-Opt/OptiMUS, and our
local evaluations differ in workflow details and run aggregation.

The table also clarifies the intended positioning. ORLM is a fine-tuned model,
CAFA a lightweight code-first method, Chain-of-Experts a multi-agent workflow,
OptiMUS a modular reflective workflow, and SAC-Opt an iterative
semantic-correction system. MCTS-based Autoformulation and NEMO are relevant
but are not inserted
into Table~\ref{tab:benchmark}: neither supplies the same six cleaned columns
under a directly alignable protocol. MCTS reports accuracy on original,
uncleaned benchmarks, whereas NEMO uses
autonomous coding agents, sandbox execution, memory, validation, and a broader
benchmark suite. We therefore discuss them as search-based and execution-aware
context rather than populate unsupported entries in the six-column comparison.

\subsection{Ablation Study}
\label{sec:exp:ablation}
\paragraph{Design.}
Four variants form a nested ablation on 153 instances. \textbf{R0} makes one
matched direct-code call. \textbf{R1} generates the final top-level ModelIR
structure but removes the concrete scalar-constraint instructions and uses no
verifier. \textbf{R2} adds those instructions, still without the verifier.
\textbf{R3} applies the deterministic verifier to the exact R2 ModelIR, so the
last contrast adds no model call. R0$\to$R1 evaluates the structured-IR
interface as a bundle (schema, expression strings, IR-oriented prompt, and
deterministic model construction); R1$\to$R2 evaluates the
constraint-granularity instruction; and R2$\to$R3 evaluates deterministic
verification. R0 is a controlled ablation, not the strongest possible
direct-code prompt; CAFA serves as the external code-first baseline in
Table~\ref{tab:benchmark}.

\begin{table*}[t]
\centering
\scriptsize
\setlength{\tabcolsep}{4.2pt}
\begin{tabular}{llcccccc}
\toprule
Variant & Design & \multicolumn{3}{c}{Objective correct (\%)} & Build (\%) & Solved (\%) & Struct. viol. \\
\cmidrule(lr){3-5}
 & & IndustryOR & ComplexLP & Pooled & Pooled & Pooled & Pooled \\
\midrule
R0 & Direct Gurobi code, one call & 7.1 & 7.2 & 7.2 & 37.3 & 35.3 & N/A \\
R1 & Structured expression-string IR & 50.0 & 30.6 & 35.9 & 44.4 & 42.5 & 522 \\
R2 & + scalar-constraint instruction & 64.3 & 55.0 & 57.5 & 71.2 & 68.6 & 39 \\
R3 & + deterministic verifier (final) & 64.3 & 70.3 & 68.6 & 79.1 & 75.2 & 12 \\
\midrule
R0$\to$R1 & Structured-IR stage gain & +42.9 pp & +23.4 pp & +28.8 pp & & & \\
R1$\to$R2 & Scalar-constraint stage gain & +14.3 pp & +24.3 pp & +21.6 pp & & & $-483$ \\
R2$\to$R3 & Verifier stage gain & +0.0 pp & +15.3 pp & +11.1 pp & & & $-27$ \\
\bottomrule
\end{tabular}
\caption{Ablation study on 153 cleaned instances. All objective, build, and
solved entries are percentages with one decimal. Structural violations count
free-index/non-scalar IR events and are not defined for the code-first R0
condition. Stage gains are observed sequential changes, not independent causal
contributions. The scalar-constraint and verifier contrasts are significant in
pooled paired tests ($p=1.1\times10^{-6}$ and $p=1.5\times10^{-5}$); the
aggregate R0$\to$R2 contrast is also significant ($p<10^{-19}$). The verifier
has no accuracy effect on IndustryOR but adds 15.3 points on ComplexLP.}
\label{tab:mechanism}
\end{table*}

\paragraph{Results.}
Table~\ref{tab:mechanism} shows pooled objective correctness increasing from
$7.2\%$ to $35.9\%$, $57.5\%$, and $68.6\%$. The structured IR interface gives
the largest observed sequential increase ($+28.8$ points pooled; $+33.2$
macro). The concrete scalar-constraint instruction adds $+21.6$ pooled points ($+19.3$
macro) and reduces free-index/non-scalar violations from 522 to 39. The verifier
adds $+11.1$ pooled points ($+7.7$ macro), with a dataset-dependent effect:
IndustryOR remains $64.3\%$, whereas ComplexLP increases from $55.0\%$ to
$70.3\%$.

The R1$\to$R2 and R2$\to$R3 contrasts have pooled exact McNemar
$p=1.1\times10^{-6}$ and $p=1.5\times10^{-5}$, with bootstrap 95\% intervals
$[13.7,29.4]$ and $[6.5,16.3]$ percentage points. The broader R0$\to$R2
IR-first contrast is also significant ($p<10^{-19}$). These are sequential
stage-wise effects, not independent causal contributions, and the schema,
expression language, prompt, and compiler were not separately isolated.

Build rate rises from $44.4\%$ in R1 to $71.2\%$ in R2, and the verifier further
raises it to $79.1\%$. Structural violations fall from 522 to 39 and then 12.
The results indicate that explicit constraint scope is useful on these datasets,
but they do not remove its scaling cost: large Cartesian index products can make
explicit generation substantially longer.

\paragraph{Executability diagnostics.}
Pooled compilation failures fall from 96 in R0 to 84, 42, and 30 in R1--R3.
Within the IR variants, undefined-name failures decline from 81 to 41 and 29.
R0 has no meaningful IR structural-violation count because it emits code rather
than ModelIR. Build and Solved remain above objective correctness for every
variant, showing that improved executability does not eliminate semantic errors.

\paragraph{Identification boundary.}
The nested design provides different levels of attribution. R1$\to$R2 and
R2$\to$R3 are narrow prompt/component contrasts, whereas R0$\to$R1 replaces
an entire interface: the output schema, expression representation, prompt
contract, and deterministic construction path change together. The experiment
therefore supports a bundle-level structured-IR effect, not independent claims
for schema, expression strings, or compilation. Likewise, the current design
does not compare deterministic compilation against an LLM compiler. This
boundary is important because a system can benefit from the combination even
when no single subchoice has been isolated.

\subsection{Inference Cost and Accuracy--Cost Analysis}
\label{sec:exp:cost}
\begin{table}[t]
\centering
\small
\setlength{\tabcolsep}{3.5pt}
\begin{tabular}{lccc}
\toprule
Method & Calls/inst. & Total tok. & Tok./inst. \\
\midrule
CAFA             & 1  & 7{,}954   & 795 \\
\textbf{IR2Solve}& 1 & 16{,}031 & 1{,}603 \\
Chain-of-Experts & 8  & 53{,}459  & 5{,}346 \\
SAC-Opt          & 39 & 367{,}028 & 36{,}703 \\
\bottomrule
\end{tabular}
\caption{Inference-cost profile on the COST-10 panel (10 instances, locally
measured, GPT-4o backbone). IR2Solve and CAFA use a single semantic call per
instance; Chain-of-Experts uses eight and SAC-Opt uses 39, spending
$3.3\times$ and $22.9\times$ more tokens than IR2Solve. COST-10 is a
\emph{cost}-profiling panel drawn from relatively simple NL4Opt-style cases,
selected to characterize inference cost rather than to expose the maximum
accuracy advantage of complex systems; we therefore do not use COST-10 accuracy
to rank modeling quality.}
\label{tab:cost}
\end{table}

\paragraph{Matched cost profile.}
COST-10 contains ten NL4Opt instances selected across source-length deciles and
is used only to profile inference cost. IR2Solve uses one call and 16,031 total
tokens. CAFA is cheaper at 7,954 tokens, also with one call. Chain-of-Experts
uses eight calls and 53,459 tokens; SAC-Opt uses 39 calls and 367,028
tokens. CoE and SAC-Opt therefore consume $3.3\times$ and $22.9\times$ as many
tokens as IR2Solve. COST-10 is not used to rank general modeling accuracy: its
instances are relatively simple and the panel is too small to expose the maximum
benefit of complex iterative workflows.

MCTS and NEMO are not included in this matched cost table because a faithful
COST-10 workflow could not be reproduced under the same instrumentation. MCTS
requires a search implementation and evaluator not available as a complete
cleaned-split pipeline, while NEMO depends on autonomous coding agents, sandbox
execution, memory, and inference-time selection. Their published workflows are
clearly more involved than one-call generation, but we do not infer token or
call totals from qualitative descriptions.

\paragraph{Cross-study proxy.}
Figure~\ref{fig:proxy} plots the six-benchmark means in
Table~\ref{tab:benchmark} against locally measured COST-10 tokens per instance.
CAFA is cheaper but less accurate; SAC-Opt reports a higher mean at far greater
inference cost; and IR2Solve lies between them with one semantic call. This is a
deployment proxy, not a strict Pareto proof, because the axes come from different
panels and protocols differ in backbone, extraction, and run aggregation.

\begin{figure}[h]
\centering
\includegraphics[width=0.94\columnwidth]{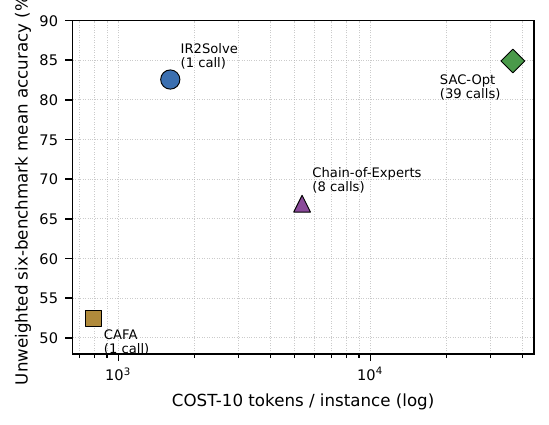}
\caption{Cross-study accuracy--cost proxy. Vertical: unweighted mean objective
accuracy across six cleaned datasets; horizontal: locally measured COST-10
tokens per instance (log scale), with calls annotated. Different source protocols make this
a deployment-oriented landscape rather than a same-instance Pareto test.}
\label{fig:proxy}
\end{figure}

\paragraph{Deployment interpretation.}
Calls and tokens describe complementary burdens. A one-call method is simpler
to schedule, audit, and retry, while token volume captures prompt and response
size within that call. CAFA minimizes both token use and orchestration overhead,
but its cleaned-benchmark mean is lower. SAC-Opt incurs considerably greater
inference cost to obtain a higher six-benchmark mean under its reported
protocol. IR2Solve occupies the intermediate operating point: one semantic
call, an explicit inspectable artifact, and a deterministic tail. The preferred
point therefore depends on the value assigned to accuracy, inference budget,
latency, and formulation auditability rather than on a single scalar score.

\subsection{Failure Analysis}
\label{sec:exp:failure}
Figure~\ref{fig:failure} decomposes pooled outcomes in the ablation. Direct code
generation is dominated by build failure. R1 converts many cases into executable
models, while the scalar-constraint instruction sharply reduces malformed
index-scope cases. The verifier moves additional cases into the
objective-correct segment. In R3, $20.9\%$ fail before build, $3.9\%$ build but
do not solve, and $6.6\%$ solve with an incorrect objective. These categories
are execution-level outcomes rather than manually adjudicated semantic causes.
Build failures can arise from unsupported or malformed representations, while
objective-incorrect cases may reflect missing constraints, incorrect domains,
or misread objectives. The deterministic tail can normalize known structures,
but it cannot reconstruct problem semantics that were never captured in the
IR. Objective correctness is also a coarse endpoint: a model can match the
reference optimum while differing in its feasible region.

\begin{figure}[t]
\centering
\includegraphics[width=0.94\columnwidth]{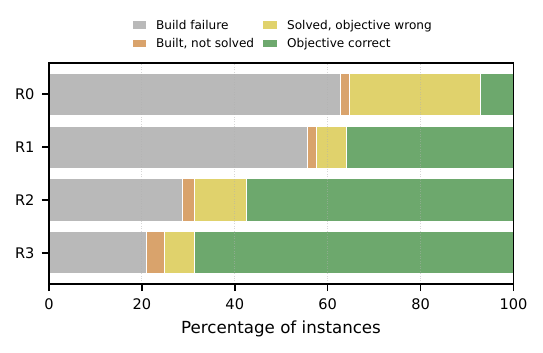}
\caption{Pooled failure conversion across the four ablation variants. Categories
are derived from nested Build, Solved, and objective-correct rates and sum to
$100\%$ within each row.}
\label{fig:failure}
\end{figure}

\section{Limitations}
\label{sec:limitations}

This study has several limitations. The mechanism ablation covers IndustryOR
and ComplexLP rather than every cleaned benchmark. The structured IR contrast
bundles schema, expression strings, prompt design, and deterministic model
construction, so their individual effects are not isolated. ModelIR is targeted
at finite LP/MILP-style formulations and does not claim universal coverage of
optimization formalisms. Explicit scalar expansion can grow rapidly with large
index products, increasing output length and truncation risk. The accuracy--cost
plot combines full-benchmark accuracy with cost measured on a separate
ten-instance panel, and is therefore a deployment proxy rather than a
same-instance Pareto proof. Finally, objective-value agreement does not establish
full semantic equivalence.

Addressing these limitations requires both broader representation support and
more direct evaluation of semantic coverage. Future work should extend ModelIR
while retaining deterministic compilation, develop source-grounded semantic
checks, and inspect recurring full-benchmark failure clusters. In particular,
representation coverage should be measured directly from reference formulations
rather than inferred from generation failures. Such a study should classify
reference models as representable, unsupported, or indeterminate and distinguish
native schema coverage from coverage obtained through mathematically equivalent
reformulation. This would sharpen the boundary between limitations of the
representation and those of the base language model.

\section{Conclusion}
\label{sec:conclusion}

IR2Solve separates natural-language interpretation from solver realization: one
semantic call produces ModelIR, after which a deterministic verifier and
IR-to-Gurobi compiler complete the pipeline. Across cleaned benchmarks, it
achieves competitive objective correctness with a one-call inference profile.
The ablation shows that the structured IR interface produces the largest
observed sequential increase, while the concrete scalar-constraint convention
and deterministic verification provide further, dataset-dependent improvements.

From a deployment perspective, the explicit IR offers a useful inspection point
between language understanding and solver execution. A practitioner can examine
declared variables, domains, objective direction, and individual constraints
before solving, while the deterministic tail makes the same artifact replayable.
This does not replace domain review, but it separates semantic modeling errors
from program-construction failures and provides a clearer basis for auditing than
an opaque end-to-end code sample. The one-call design also keeps operational
behavior predictable: cost does not depend on an open-ended number of reflection
or debugging rounds. Taken together, these results suggest that explicit
intermediate representations offer a practical middle ground between brittle
direct code generation and costly iterative autoformulation workflows while
improving inspectability and replayability.

\clearpage
\appendix
\setcounter{table}{0}
\setcounter{figure}{0}
\setcounter{algorithm}{0}
\renewcommand{\thetable}{A\arabic{table}}
\renewcommand{\thefigure}{A\arabic{figure}}
\renewcommand{\thealgorithm}{A\arabic{algorithm}}
\section{ModelIR Schema and Expression Language}
\label{app:schema}
A ModelIR instance has the top-level fields \texttt{meta}, \texttt{sets},
\texttt{params}, \texttt{vars}, \texttt{objective}, and
\texttt{constraints}. Table~\ref{tab:schema-detail} summarizes the dataclass
contract used by the final pipeline.

\begin{table*}[t]
\centering\small\setlength{\tabcolsep}{4pt}
\begin{tabular}{p{0.13\textwidth}p{0.13\textwidth}p{0.64\textwidth}}
\toprule
Field & Type & Recognized contents \\
\midrule
\texttt{meta} & object & objective sense (default \texttt{min}); nullable problem identifier, source, description, and version \\
\texttt{sets} & list & name, explicit finite elements, nullable description \\
\texttt{params} & list & name, zero/one/two-dimensional index signature, numeric values, nullable description \\
\texttt{vars} & list & name, index signature, type in \{continuous, integer, binary\}, bounds, nullable description \\
\texttt{objective} & object & name, min/max sense, one expression string, nullable description \\
\texttt{constraints} & list & name, left expression, relation in \{\texttt{<=}, \texttt{>=}, \texttt{==}\}, right expression, nullable description \\
\bottomrule
\end{tabular}
\caption{ModelIR field summary. Names referenced in expressions must be
declared in the same object. The prompt forbids extra keys. At parse time,
unsupported fields (e.g., \texttt{forall}, evidence, repair, or dependency
metadata) raise an error; other unrecognized fields are not represented by the
ModelIR dataclasses and therefore do not reach the compiler.}
\label{tab:schema-detail}
\end{table*}

Expressions admit numeric literals, arithmetic, parentheses, declared
parameters and variables, explicit indexing, and \texttt{sum}/\texttt{quicksum}
generators over declared finite sets. The compiler exposes only declared
objects plus \texttt{sum}, \texttt{quicksum}, \texttt{enumerate},
\texttt{range}, \texttt{len}, \texttt{max}, \texttt{min}, and
\texttt{abs}; Python built-ins are otherwise disabled.

\paragraph{Constraint granularity.}
Each constraint entry denotes one scalar relation. A finite per-index family is
expanded into multiple entries instead of carrying an implicit universal
quantifier. This reduces index-scope inference during compilation, but its
output length can grow with large index products.

\section{Complete Worked Example}
\label{app:example}
\noindent\textbf{Natural-language description.}\par\nobreak\smallskip
\begin{quote}\small
The Zhang family has six children: Harry, Hermione, Ron, Fred, George, and
Ginny. The respective costs of taking the six children are 1200, 1650, 750, 800,
800, and 1500. The family wants to minimize total cost, can take at most four children,
must take Ginny, cannot take Fred or George if Harry is taken, must take Fred
and Hermione if George is taken, and must take at least three children.
\end{quote}
\newpage
\noindent\textbf{ModelIR.}\par\nobreak\vspace{2pt}
\begin{lstlisting}[style=paperprompt]
{
  "meta": {
    "problem_id": "demo_family",
    "source": "worked_example",
    "description": "Family trip selection problem.",
    "sense": "min",
    "version": "v1"
  },
  "sets": [{
    "name": "Children",
    "elements": ["Harry", "Hermione", "Ron",
                 "Fred", "George", "Ginny"],
    "description": "Children considered for the trip."
  }],
  "params": [{
    "name": "cost",
    "indices": ["Children"],
    "values": {
      "Harry": 1200.0, "Hermione": 1650.0,
      "Ron": 750.0, "Fred": 800.0,
      "George": 800.0, "Ginny": 1500.0
    },
    "description": "Cost of taking each child."
  }],
  "vars": [{
    "name": "x",
    "indices": ["Children"],
    "vartype": "binary",
    "lb": 0.0,
    "ub": 1.0,
    "description": "Whether each child is taken."
  }],
  "objective": {
    "name": "minimize_cost",
    "sense": "min",
    "expr": "quicksum(cost[c] * x[c] for c in Children)",
    "description": "Minimize total trip cost."
  },
  "constraints": [
    {
      "name": "max_4_children",
      "expr_lhs": "quicksum(x[c] for c in Children)",
      "sense": "<=", "expr_rhs": "4.0",
      "description": "At most four children."
    },
    {
      "name": "take_ginny",
      "expr_lhs": "x['Ginny']",
      "sense": "==", "expr_rhs": "1.0",
      "description": "Ginny must be taken."
    },
    {
      "name": "harry_no_fred",
      "expr_lhs": "x['Harry'] + x['Fred']",
      "sense": "<=", "expr_rhs": "1.0",
      "description": "Harry and Fred cannot both be taken."
    },
    {
      "name": "harry_no_george",
      "expr_lhs": "x['Harry'] + x['George']",
      "sense": "<=", "expr_rhs": "1.0",
      "description": "Harry and George cannot both be taken."
    },
    {
      "name": "george_with_fred",
      "expr_lhs": "x['George'] - x['Fred']",
      "sense": "<=", "expr_rhs": "0.0",
      "description": "George implies Fred."
    },
    {
      "name": "george_with_hermione",
      "expr_lhs": "x['George'] - x['Hermione']",
      "sense": "<=", "expr_rhs": "0.0",
      "description": "George implies Hermione."
    },
    {
      "name": "at_least_3_children",
      "expr_lhs": "quicksum(x[c] for c in Children)",
      "sense": ">=", "expr_rhs": "3.0",
      "description": "At least three children."
    }
  ]
}
\end{lstlisting}
Each constraint is independently inspectable and compiles to one named Gurobi
constraint. The expression strings contain mathematical content but no solver
program structure.

\noindent\textbf{Verification, compilation, and solution.}\par\nobreak\smallskip
As written, the example already satisfies the schema and scalar-constraint
contract, so no verifier repair is required before compilation. The compiler
creates six binary variables and seven named constraints. Gurobi returns an
optimal solution selecting Ron, Fred, and Ginny, with objective value $3050$.
This completes the path from the natural-language description to ModelIR,
deterministic processing, and the solver result.

\section{Generation Prompts and Ablation Variants}
\label{app:prompts}

\subsection{R2/R3 ModelIR System Prompt}
The following is the exact system prompt. R2 uses it to generate ModelIR; R3
applies deterministic verification to the same R2-generated ModelIR and adds no
semantic call.
\begin{lstlisting}[style=paperprompt]
You are an expert in mathematical modeling and optimization.
Read a natural-language description of a linear / integer optimization problem and
output a JSON object describing the model in a strictly structured IR called ModelIR.

Hard requirements:
- Output JSON ONLY (no markdown, no explanations).
- Produce valid JSON (double quotes only; no trailing commas; no comments).
- Follow the schema exactly. Do NOT add extra keys.
- Use Python expressions in strings for objective/constraints
  (only arithmetic, indexing, quicksum/sum).
\end{lstlisting}

\subsection{R2/R3 ModelIR User Prompt Template}
The literal problem description is appended after the final line.
\begin{lstlisting}[style=papersmall]
Generate ONE JSON object with these top-level keys:
- meta
- sets
- params
- vars
- objective
- constraints

==== Schema ====
1) meta
{
  "problem_id": "string or null",
  "source": "string or null",
  "description": "string or null",
  "sense": "min" or "max",
  "version": "string or null"
}

2) sets
A list of sets. Each set is:
{
  "name": "string",
  "elements": ["string", ...],   // IMPORTANT: every element MUST be a STRING
  "description": "string or null"
}

3) params
A list of parameters. Each param is:
{
  "name": "string",
  "indices": [] | ["SetName"] | ["SetName1","SetName2"],
  "values": number | {key:number,...} | {i:{j:number,...},...},
  "description": "string or null"
}
Canonical forms for values:
- Scalar (0D): a number (a single-entry dict is tolerated by the compiler)
- 1D over set I: {"i1": 1.0, "i2": 2.0, ...}
- 2D over sets I,J (RECOMMENDED): {"i1": {"j1": 1.0, "j2": 3.0}, ...}

4) vars
A list of decision variables. Each var is:
{
  "name": "string",
  "indices": [] | ["SetName"] | ["SetName1","SetName2"],
  "vartype": "continuous" | "integer" | "binary",
  "lb": number,
  "ub": number or null,
  "description": "string or null"
}
IMPORTANT (integrality baseline):
- Default to "integer" for counts/units/number of items/visits/vehicles/facilities/assignments.
- Use "binary" only for yes/no decisions.
- Use "continuous" only for divisible amounts.

5) objective
{
  "name": "string",
  "sense": "min" or "max",
  "expr": "Python expression string",
  "description": "string or null"
}

6) constraints
A list of constraints. Each constraint is:
{
  "name": "string",
  "expr_lhs": "Python expression string",
  "sense": "<=" | ">=" | "==",
  "expr_rhs": "Python expression string",
  "description": "string or null"
}

==== Expression rules (compile-safe baseline) ====
Allowed building blocks inside expr strings:
- Numbers, + - * /, parentheses
- Indexing into params/vars: x[i], x[i][j], cost[i][j], demand[i]
- Aggregation: quicksum( ... for i in I ), sum( ... for i in I )
  (You may nest sums, e.g., quicksum(quicksum(x[i][j] for j in J) for i in I))

Hard constraints (must follow):
- Every constraint must be a SINGLE SCALAR constraint (lhs and rhs each evaluate to a scalar).
- Do NOT write implicit "for all" constraints inside one constraint.
  If a constraint must hold for each i, EXPAND it into multiple entries in the constraints list.
- Do NOT use free index symbols outside of a sum/generator that binds them.

Before outputting, self-check:
- JSON parses; keys match schema; no extra keys.
- All referenced names (sets/params/vars) are defined.
- No free indices; per-index constraints are expanded.

Now read the following optimization problem and output the JSON IR (JSON ONLY):
{PROBLEM_TEXT}
\end{lstlisting}

\subsection{R1 Scalar-Constraint Instruction Ablation}
R1 is derived mechanically from the R2 template: it removes exactly the
following block and the matching final self-check line; all other prompt text,
the schema, model, temperature, and compiler path are unchanged.
\begin{lstlisting}[style=paperprompt]
Hard constraints (must follow):
- Every constraint must be a SINGLE SCALAR constraint
  (lhs and rhs each evaluate to a scalar).
- Do NOT write implicit "for all" constraints inside one
  constraint. If a constraint must hold for each i, EXPAND it
  into multiple entries in the constraints list.
- Do NOT use free index symbols outside of a sum/generator
  that binds them.

Removed self-check:
- No free indices; per-index constraints are expanded.
\end{lstlisting}

\subsection{R0 Direct-Code Prompt}
R0 uses the existing code-generation prompt with \texttt{ir=None}. The
prompt was not specially optimized for this ablation, making R0 a controlled
one-call interface comparison rather than a strongest-possible code-first
baseline. Because the prompt accepts a ModelIR slot, the supplied IR value is
literally \texttt{null}.

\noindent\textbf{System prompt.}\par\nobreak\smallskip
\begin{lstlisting}[style=papersmall]
You are an expert in mathematical modeling and optimization.
Your task is to:
1. Read a natural language description of a linear/integer optimization problem.
2. Read the provided JSON immediate representation (ModelIR) of the modeling process of the problem.
3. Generate Python code that uses the Gurobi library to create the optimization model based on the ModelIR.

IMPORTANT: You MUST generate a Python function named `build_gurobi_model()` that:
1. Takes no parameters
2. Creates a Gurobi model using the provided information
3. Returns the created model

The generated Python code should follow these guidelines:
1. Import necessary libraries: import gurobipy as gp
2. Create the model: model = gp.Model("LP_Model")
3. Add variables according to variable information:
   - Continuous variables: model.addVar(lb=..., ub=..., vtype=gp.CONTINUOUS, name=...)
   - Integer variables: model.addVar(vtype=gp.INTEGER, name=...)
   - Binary variables: model.addVar(vtype=gp.BINARY, name=...)
4. Add constraints: model.addConstr(...)
5. Set objective function: model.setObjective(...)
6. Define a function `build_gurobi_model()` that creates and returns the model
7. At the end, call the function to build the model

Please output the complete Python code directly without any additional explanations.

EXAMPLE FORMAT:
```python
import gurobipy as gp
from gurobipy import GRB

def build_gurobi_model():
    # Create a new model
    model = gp.Model("LP_Model")
    
    # Your modeling code here...
    
    return model

# Build the model
model = build_gurobi_model()
\end{lstlisting}

\noindent\textbf{User prompt template.}\par\nobreak\smallskip
\begin{lstlisting}[style=papersmall]

SCHEMA FOR THE INTERMEDIATE REPRESENTATION (IR):
The IR is a structured representation of the problem, typically in JSON format.
It is not always correct. It includes the following components:

1) sets
A list of sets. Each set is:
{
  "name": "string",
  "elements": ["string", ...],   // IMPORTANT: every element MUST be a STRING (even if it looks like a number)
  "description": "string or null"
}

2) params
A list of parameters. Each param is:
{
  "name": "string",
  "indices": [] | ["SetName"] | ["SetName1","SetName2"],  // 0D/1D/2D only
  "values": number | {key:number,...} | {i:{j:number,...},...},   // recommended canonical forms below
  "description": "string or null"
}

Canonical forms for values:
- Scalar (0D): a number (or a single-entry dict is tolerated by the compiler)
- 1D over set I: {"i1": 1.0, "i2": 2.0, ...}
- 2D over sets I,J (RECOMMENDED): {"i1": {"j1": 1.0, "j2": 3.0}, "i2": {...}, ...}

3) vars
A list of decision variables. Each var is:
{
  "name": "string",
  "indices": [] | ["SetName"] | ["SetName1","SetName2"],  // 0D/1D/2D only
  "vartype": "continuous" | "integer" | "binary",
  "lb": number,
  "ub": number or null,
  "description": "string or null"
}

4) objective
{
  "name": "string",
  "sense": "min" or "max",
  "expr": "Python expression string",
  "description": "string or null"
}

5) constraints
A list of constraints. Each constraint is:
{
  "name": "string",
  "expr_lhs": "Python expression string",
  "sense": "<=" | ">=" | "==",
  "expr_rhs": "Python expression string",
  "description": "string or null"
}

TASK EXECUTION:
1. PROBLEM DESCRIPTION:
{PROBLEM_TEXT}

2. PROVIDED INTERMEDIATE REPRESENTATION (IR):
null

CRITICAL REQUIREMENTS:
- Generate a Python function named `build_gurobi_model()` that takes no parameters and returns the Gurobi model
- The function should use the provided ModelIR to build the model
- Include all necessary imports
- At the end of the code, call `build_gurobi_model()` to create the model
- Do NOT include any extra explanations outside the code

YOUR OUTPUT MUST BE A VALID python code that creates a Gurobi model.

\end{lstlisting}

\begin{table*}[t]
\centering\small\setlength{\tabcolsep}{4pt}
\begin{tabular}{lccccc}
\toprule
Variant & Generated artifact & Scalar instruction & Verifier & Semantic calls & Solver path \\
\midrule
R0 & Gurobi Python & N/A & none & 1 & generated code $\rightarrow$ Gurobi \\
R1 & ModelIR & removed & none & 1 & deterministic compiler \\
R2 & ModelIR & present & none & 1 & deterministic compiler \\
R3 & same R2 ModelIR & present & L1 then L2 & 0 additional & deterministic compiler \\
\bottomrule
\end{tabular}
\caption{Representation-ablation variants. All semantic generations use
GPT-4o-2024-08-06, temperature 0, one sample, and no semantic repair or second
candidate. R3 is paired with the exact R2 generation.}
\label{tab:variants}
\end{table*}

The prompt-template SHA-256 hashes are:
\begin{lstlisting}[style=papersmall]
R2 system:
b5a3ece68c765b9473a382696e63889d2c18f0ea4678ce872f5e1fdb773300d7
R2 user template:
1aa3ae328264cbe9657fc26624b367580a43b3ece754242e2c83a430cc72d898
R1 derived template:
b2b54cead5da346a479184c31da0b9ee3a9adbdcad376d8a26ec5a4e75bec130
\end{lstlisting}

\section{Deterministic Verifier Specification}
\label{app:verifier}
Rules execute once in the fixed order listed below.
Each rule first detects a pattern, mutates the current IR if its guard holds, and
passes that updated IR to the next rule. There is no LLM, resampling, score,
acceptance gate, or data-dependent stopping rule.

\begin{enumerate}[leftmargin=*,label=\textbf{\arabic*.},itemsep=3pt]
\item \textbf{Set/key canonicalization.}
Guard: any set element is non-string, or a 1D/nested-2D parameter key is
non-string. Rewrite: apply \texttt{str()} and keep the first value if two keys
collapse to the same string. If any declared set consists entirely of digit
strings, every numeric bracket token such as \texttt{[123]} in objective and
constraint expressions becomes \texttt{['123']}.

\item \textbf{2D parameter normalization.}
Guard: a parameter has exactly two indices, dictionary values that are not
already a dict-of-dicts, and at least one key parsable as \texttt{(i,j)},
\texttt{i,j}, or \texttt{i|j}. Rewrite: convert parsable entries to
\texttt{\{i:\{j:value\}\}} form. Unparsable entries are not inserted, and the
rule is a no-op if no key parses.

\item \textbf{Missing square diagonal.}
Guard: a nonempty 2D parameter is indexed by the same declared set twice, its
value is a nonempty dict-of-dicts, and a row or diagonal cell is missing.
Rewrite: create missing rows and set missing diagonal cells to \texttt{0.0}.
The guard does not inspect source text; the zero-diagonal assumption is a
deterministic benchmark heuristic, not a universal semantic guarantee.

\item \textbf{Free-index unrolling.}
Guard: an AST load name is in
\{\texttt{i,j,k,t,u,v,n,m,p,q,w}\}, is not generator-bound, and is neither a
declared object nor an allowed global. Its set is inferred from an explicit
\texttt{for s in Set} cue or indexed-object occurrence counts; under competing
candidates the leader must exceed the runner-up by at least two occurrences.
The selected set must contain 1--50 elements. Rewrite: replace one free symbol
by each set element and emit one named scalar constraint per element.

\item \textbf{Sum-call normalization.}
Guard: comma-before-generator syntax such as
\texttt{quicksum(x[i], for i in I)}, or an AST-valid top-level
\texttt{sum}/\texttt{quicksum} call with multiple positional arguments and no
generator/keywords. Rewrite: remove the comma before \texttt{for}, or replace
the multiple-argument call by a parenthesized additive chain.

\item \textbf{Integrality sanity.}
Guard: a variable is continuous and a substring cue occurs in its generated
name/description or indexed set name/description. Variable cues are
assignment/select/choose/take/open/close/build/install/use/activate/visit/route/
serve/facility/worker/task/item/node/edge/arc; set cues additionally include
city and job. Rewrite: bounds $[0,1]$ become binary; all other detected
variables become integer, with binary bounds normalized to 0 and 1.

\item \textbf{Direction sanity.}
Guard: the relation maps to equality and the generated name/description
contains an at-least cue
(\texttt{at least}, \texttt{no less}, \texttt{minimum}, \texttt{min },
\texttt{>= }, \texttt{atleast}) or an at-most cue
(\texttt{at most}, \texttt{no more}, \texttt{maximum}, \texttt{max },
\texttt{<= }, \texttt{atmost}, \texttt{up to}, \texttt{upto},
\texttt{limit}). Rewrite: equality becomes $\geq$ or $\leq$; if both cue
families occur, the rule is a no-op.
\end{enumerate}

Rules 6--7 inspect metadata generated in the same semantic call, not
independently grounded source spans. The ablation measures their observed net
effect and does not certify every individual rewrite.

\begin{table}[t]
\centering\scriptsize\setlength{\tabcolsep}{3pt}
\begin{tabular}{p{0.25\columnwidth}p{0.31\columnwidth}p{0.31\columnwidth}}
\toprule
Rule & Before & After \\
\midrule
Set elements & \texttt{I=[1,2,3]} & \texttt{I=["1","2","3"]} \\
2D parameter & \texttt{\{"a,b":5\}} & \texttt{\{"a":\{"b":5\}\}} \\
Square diagonal & off-diagonal $d[a,b]$ only & add $d[a,a]=d[b,b]=0$ \\
Free index & \texttt{x[i]>=2}, $I=\{a,b,c\}$ & three scalar entries \\
Sum syntax & \texttt{quicksum(x[i], for i in I)} & valid generator form \\
Integrality & continuous \texttt{assign\_x}, $[0,1]$ & binary \\
Direction & equality described ``at most 5'' & \texttt{<=} \\
\bottomrule
\end{tabular}
\caption{Unit-tested before/after examples for the seven active rules.}
\label{tab:verifier-examples}
\end{table}

\section{Parser, Compiler, and Execution Contract}
\label{app:compiler}

\paragraph{Parsing.}
The runner extracts the first fenced JSON object, otherwise the first balanced
JSON object, and then parses it. The ModelIR dataclasses contain only the
fields in Table~\ref{tab:schema-detail}. Unsupported fields such as
\texttt{forall}, evidence, annotations, dependencies, repair metadata, or
constraint-family metadata raise a parse error.

\paragraph{Compilation.}
The compiler creates explicit set lists; scalar, 1D, or nested-2D parameters;
and scalar, 1D, or 2D Gurobi variables. Dimensions above two raise
\texttt{NotImplementedError}. It evaluates the objective and each scalar
constraint in the restricted environment described in Section~\ref{app:schema},
adds one named Gurobi constraint per IR entry, and calls \texttt{model.update()}.
The R3 execution path is:
\begin{lstlisting}[style=paperprompt]
extract_json_from_text -> json_to_model_ir -> run_l1
-> run_l2 -> ir_to_gurobi -> optimize
\end{lstlisting}

\paragraph{Build and solve outcomes.}
A case is built if compilation returns a Gurobi model without an exception.
The runner sets \texttt{OutputFlag=0} and \texttt{TimeLimit=60}, then calls
\texttt{optimize}. It records Gurobi statuses
\texttt{OPTIMAL}, \texttt{INFEASIBLE}, \texttt{INF\_OR\_UNBD},
\texttt{UNBOUNDED}, \texttt{TIME\_LIMIT}, and \texttt{SUBOPTIMAL} when
available. The objective is read only when \texttt{SolCount>0}; thus the
execution-level \emph{Solved} count is exactly the number of cases with a
recorded objective, including a feasible incumbent under a non-optimal status.
A compilation exception is stored as \texttt{build\_error}; an exception during
optimization is stored as \texttt{solve\_error:<ExceptionType>}.

\paragraph{Objective correctness.}
The executable comparator is
\[
 |z-z^\star|\leq 10^{-4}+10^{-6}\max(1,|z^\star|).
\]
This is the standard floor-at-one interpretation of
\texttt{is\_close(atol=1e-4, rtol=1e-6)} used in the local evaluation
implementation. Reference objectives are loaded only for evaluation scoring,
after generation and solver execution.

\section{Formal Notes and Representation Boundaries}
\label{app:formal}
\subsection{Conditional Representation Restriction}
Let $\mathcal M_L$ be the formulations expressible in representation $L$ and
$\mathcal M^\star(d)$ the correct formulations for description $d$.
\begin{lemma}[Conditional restriction]
If $\mathcal M^\star(d)\subseteq\mathcal M_L$ and
$\Pr[m\in\mathcal M_L\mid d]>0$, then
\[
\Pr[m\in\mathcal M^\star(d)\mid m\in\mathcal M_L,d]
\geq \Pr[m\in\mathcal M^\star(d)\mid d].
\]
\end{lemma}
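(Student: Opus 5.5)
The plan is to reduce the claim to elementary conditional probability, using the coverage hypothesis $\mathcal M^\star(d)\subseteq\mathcal M_L$ to turn an intersection of events into a single event.

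Throughout, fix $d$ and write $A$ for the event $m\in\mathcal M^\star(d)$ and $B$ for the event $m\in\mathcal M_L$, both under the generation distribution $p_\phi(m\mid d)$ of Eq.~\eqref{eq:joint}.

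First, I will note that coverage gives $A\subseteq B$ as events, so $A\cap B=A$. The second hypothesis, $\Pr[B\mid d]>0$, guarantees that the conditional probability on the left is well defined. By the definition of conditional probability,
\[
\Pr[A\mid B,d]=\frac{\Pr[A\cap B\mid d]}{\Pr[B\mid d]}=\frac{\Pr[A\mid d]}{\Pr[B\mid d]}.
\]

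Second, since $0<\Pr[B\mid d]\le 1$, dividing the nonnegative quantity $\Pr[A\mid d]$ by $\Pr[B\mid d]$ can only increase it or leave it unchanged. This yields the stated inequality. Equality holds exactly when $\Pr[A\mid d]=0$ or $\Pr[B\mid d]=1$.

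There is no real obstacle. The only point requiring care is measurability: $\mathcal M^\star(d)$ and $\mathcal M_L$ must be events under the distribution on $\mathcal M$, which I would take as part of the setup, for example by treating $\mathcal M$ as countable, as holds for finite strings. I would also remark that the lemma is a statement about conditioning an unrestricted generator on landing in $\mathcal M_L$. It does not claim that prompting for ModelIR realizes this conditional distribution, which is consistent with the paper's framing of the lemma as design motivation rather than an empirical guarantee.
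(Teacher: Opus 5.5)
Your proof is correct and is the same argument as the paper's: coverage collapses the joint event $A\cap B$ to $A$, and dividing $\Pr[A\mid d]$ by $\Pr[B\mid d]\in(0,1]$ gives the inequality. Your equality condition ($\Pr[A\mid d]=0$ or $\Pr[B\mid d]=1$) and your measurability remark are accurate additions that the paper does not state.
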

\begin{proof}
The coverage assumption makes the joint event of correctness and
representability equal to the correctness event. Dividing by a conditioning
probability in $(0,1]$ yields the inequality.
\end{proof}
This conditional statement does not show that ModelIR covers every correct
formulation or that schema-constrained prompting samples from the corresponding
conditional distribution.

\subsection{Finite-Instance Scalarization}
\begin{proposition}[Finite-instance scalarization]
For a finite set $I=\{i_1,\ldots,i_m\}$, the family
$g_i(x)\leq0$ for every $i\in I$ and the explicit conjunction
$g_{i_1}(x)\leq0,\ldots,g_{i_m}(x)\leq0$ define the same feasible region.
\end{proposition}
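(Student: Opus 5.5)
The plan is to prove equality of the two feasible regions by showing mutual inclusion of sets. Both sets are taken inside the common variable domain $\mathcal X$. Define
\[
F_{\forall}=\{x\in\mathcal X:\ g_i(x)\le 0\ \text{for every } i\in I\}
\]
and
\[
F_{\wedge}=\{x\in\mathcal X:\ g_{i_1}(x)\le 0,\ \ldots,\ g_{i_m}(x)\le 0\}.
\]
Each of these can be rewritten as an intersection of sublevel sets. Write $S_i=\{x\in\mathcal X: g_i(x)\le 0\}$. Then $F_{\forall}=\bigcap_{i\in I}S_i$, and $F_{\wedge}=S_{i_1}\cap\cdots\cap S_{i_m}$.

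\textbf{Inclusion $F_{\forall}\subseteq F_{\wedge}$.} Take $x\in F_{\forall}$. Every listed index $i_k$ belongs to $I$, so instantiating the universal quantifier at $i=i_k$ gives $g_{i_k}(x)\le 0$. Doing this for each $k=1,\ldots,m$ yields every conjunct, so $x\in F_{\wedge}$.

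\textbf{Inclusion $F_{\wedge}\subseteq F_{\forall}$.} Take $x\in F_{\wedge}$ and an arbitrary $i\in I$. Because $I=\{i_1,\ldots,i_m\}$, there is some $k$ with $i=i_k$. Hence $g_i(x)=g_{i_k}(x)\le 0$. Since $i$ was arbitrary, $x\in F_{\forall}$. Combining the two inclusions gives equality of the feasible regions.

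The only delicate point is this second inclusion. It relies on the enumeration $i_1,\ldots,i_m$ being \emph{exhaustive} for $I$, i.e., surjective onto $I$. Duplicates in the list are harmless, because repeating a conjunct does not change the intersection. Omissions are what would break the equality, and finiteness is exactly what allows a complete explicit list of length $m$ to be written down. The same argument applies verbatim to equality constraints $h_j(x)=0$ and to mixed families, because each is again an intersection of per-index sets. Intersecting both regions with the same objective, bounds, and integrality restrictions therefore also preserves the optimal value and the set of optimal solutions.

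A closing remark, rather than a proof step, should record the scope of this guarantee. It holds at the level of mathematical feasible regions only when the ModelIR entries are a complete, correct expansion of the family. It says nothing about whether the LLM actually emits all $m$ entries. That requirement is the practical limitation noted in the main text: output length grows with $|I|$, and large Cartesian index products raise truncation risk.
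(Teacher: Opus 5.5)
Your proof is correct and takes the same approach as the paper. The paper's one-line proof notes that universal quantification over a finite set is exactly the conjunction of its instantiated statements, and your double-inclusion argument spells out that fact, including the requirement that the list $i_1,\ldots,i_m$ exhaust $I$.
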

\begin{proof}
Universal quantification over a finite set is exactly the conjunction of its
finitely many instantiated statements.
\end{proof}
The proposition establishes equivalence of the two notations; it does not imply
that explicit expansion is shorter or preferable at arbitrary scale.

\subsection{Coverage Boundary}
The schema targets finite LP/MILP-style formulations with numeric
parameters and at most two-dimensional indexed objects. It does not claim native
coverage of general nonlinear, stochastic, robust, dynamic, conic,
semidefinite, infinite-dimensional, or continuously indexed models. Equivalent
reformulations, when available, require instance-specific modeling.

\section{Experimental Protocol and Paired Statistics}
\label{app:protocol}
The 153-instance ablation contains all 42 IndustryOR and 111 ComplexLP cases.
R0--R2 each use one GPT-4o-2024-08-06 call at temperature 0; R3 reuses the
paired R2 output. There is no semantic retry except an identical-request retry
on transient transport failure, no provider fallback, no candidate selection,
and no reference objective in any prompt. Gurobi uses the 60-second limit.

For a contrast, let $n_{00}$ be incorrect under both variants, $n_{10}$ correct
only under the left variant, $n_{01}$ correct only under the right variant, and
$n_{11}$ correct under both. The reported McNemar test is the exact two-sided
binomial test on $(n_{10},n_{01})$. Confidence intervals are percentile paired
bootstrap intervals for the right-minus-left accuracy difference, resampling
instances with replacement 10,000 times using seed 20260726. Pooled effects
weight every case equally; macro effects give IndustryOR and ComplexLP equal
weight.

\begin{table}[t]
\centering\tiny\setlength{\tabcolsep}{2.5pt}
\begin{tabular}{llrrr}
\toprule
Scope & Contrast & $(n_{00},n_{10},n_{01},n_{11})$ & $p$ & CI (pp) \\
\midrule
Pooled & R0$\to$R2 & (62,3,80,8) & $2\mathrm{e}{-20}$ & [41.8,58.8] \\
Pooled & R1$\to$R2 & (58,7,40,48) & $1.1\mathrm{e}{-6}$ & [13.7,29.4] \\
Pooled & R2$\to$R3 & (48,0,17,88) & $1.5\mathrm{e}{-5}$ & [6.5,16.3] \\
IndustryOR & R0$\to$R2 & (15,0,24,3) & $1.2\mathrm{e}{-7}$ & [42.9,71.4] \\
IndustryOR & R1$\to$R2 & (15,0,6,21) & $3.1\mathrm{e}{-2}$ & [4.8,26.2] \\
IndustryOR & R2$\to$R3 & (15,0,0,27) & 1.0 & [0.0,0.0] \\
ComplexLP & R0$\to$R2 & (47,3,56,5) & $1.2\mathrm{e}{-13}$ & [37.8,57.7] \\
ComplexLP & R1$\to$R2 & (43,7,34,27) & $2.5\mathrm{e}{-5}$ & [13.5,35.1] \\
ComplexLP & R2$\to$R3 & (33,0,17,61) & $1.5\mathrm{e}{-5}$ & [9.0,22.5] \\
\bottomrule
\end{tabular}
\caption{Complete paired counts, exact two-sided McNemar tests, and percentile
paired-bootstrap 95\% intervals for the right-minus-left accuracy difference.}
\label{tab:paired-stats}
\end{table}

\section{COST-10 Selection and Token Accounting}
\label{app:cost10}
COST-10 is drawn only from the 214 cleaned NL4Opt cases. Cases are sorted by
the Unicode-codepoint count of the unnormalized source text, assigned to ten
equal-frequency bins by $\lfloor 10\,\mathrm{rank}/214\rfloor$, and one case is
drawn per bin with seed string
\texttt{20260726:nl4opt:bin:<k>}. Correctness labels and method outcomes are
not consulted.

\begin{table}[t]
\centering\scriptsize
\begin{tabular}{rrr}
\toprule
Bin & Case & Source chars \\
\midrule
0 & \texttt{prob\_62} & 371 \\
1 & \texttt{prob\_72} & 419 \\
2 & \texttt{prob\_90} & 464 \\
3 & \texttt{prob\_256} & 476 \\
4 & \texttt{prob\_237} & 502 \\
5 & \texttt{prob\_170} & 532 \\
6 & \texttt{prob\_231} & 567 \\
7 & \texttt{prob\_111} & 580 \\
8 & \texttt{prob\_1} & 672 \\
9 & \texttt{prob\_252} & 744 \\
\bottomrule
\end{tabular}
\caption{COST-10 manifest.}
\label{tab:cost-manifest}
\end{table}

All methods receive the same raw descriptions and use the official OpenAI
GPT-4o-2024-08-06 endpoint at temperature 0. Provider-reported prompt and
completion tokens are summed over every successful semantic call. There were no
transient transport retries in the recorded runs. Solver time and one-time
environment setup are excluded. CAFA follows its official one-call generation plus deterministic
cleaning/completion. The intended Chain-of-Experts workflow includes seven
forward experts followed by conductor, reducer, evaluator, and reflection
stages. In the recorded COST-10 run, each case accumulated eight successful
semantic calls before evaluator termination caused by an interface error. The
reported CoE token total is therefore the observed pre-termination cost, not the
cost of a completed end-to-end run, and should be interpreted as a lower bound
on the intended workflow's inference cost. SAC-Opt uses its official iterative
workflow with \texttt{MAX\_ITER=5} and error correction enabled.

\begin{table}[t]
\centering\tiny\setlength{\tabcolsep}{2.5pt}
\begin{tabular}{lrrrr}
\toprule
Method & Input & Output & Total & Calls/inst. \\
\midrule
CAFA & 7,018 & 936 & 7,954 & 1 \\
IR2Solve & 10,068 & 5,963 & 16,031 & 1 \\
CoE & 41,820 & 11,639 & 53,459 & 8 \\
SAC-Opt & 285,827 & 81,201 & 367,028 & 39 \\
\bottomrule
\end{tabular}
\caption{COST-10 token totals and mean observed calls per instance. The CoE
values are observed before evaluator termination and constitute a lower bound
for its intended end-to-end workflow. Relative token volumes are 3.33 (CoE)
and 22.90 (SAC-Opt) times IR2Solve.}
\label{tab:cost-summary}
\end{table}

\begin{table}[t]
\centering\tiny\setlength{\tabcolsep}{4pt}
\begin{tabular}{lrr}
\toprule
Case & CAFA I/O & IR2Solve I/O \\
\midrule
\texttt{prob\_62} & 669/88 & 974/551 \\
\texttt{prob\_72} & 682/87 & 987/580 \\
\texttt{prob\_90} & 689/85 & 994/578 \\
\texttt{prob\_256} & 681/76 & 986/429 \\
\texttt{prob\_237} & 704/103 & 1009/674 \\
\texttt{prob\_170} & 709/108 & 1014/731 \\
\texttt{prob\_231} & 705/96 & 1010/550 \\
\texttt{prob\_111} & 726/109 & 1031/760 \\
\texttt{prob\_1} & 719/88 & 1024/578 \\
\texttt{prob\_252} & 734/96 & 1039/532 \\
\bottomrule
\end{tabular}
\caption{Per-instance input/output tokens for the one-call methods.}
\label{tab:cost-percase-one}
\end{table}

\begin{table}[t]
\centering\tiny\setlength{\tabcolsep}{3pt}
\begin{tabular}{lrrr}
\toprule
Case & CoE I/O & SAC-Opt I/O & SAC calls \\
\midrule
\texttt{prob\_62} & 3771/1033 & 18320/6003 & 30 \\
\texttt{prob\_72} & 4680/1500 & 19199/6098 & 30 \\
\texttt{prob\_90} & 3758/978 & 19145/6101 & 30 \\
\texttt{prob\_256} & 3769/1034 & 16235/6282 & 28 \\
\texttt{prob\_237} & 4104/1156 & 36327/10433 & 48 \\
\texttt{prob\_170} & 4520/1285 & 44083/10696 & 56 \\
\texttt{prob\_231} & 4072/1060 & 37108/9644 & 48 \\
\texttt{prob\_111} & 4322/1114 & 52980/12212 & 60 \\
\texttt{prob\_1} & 4353/1246 & 22081/7054 & 33 \\
\texttt{prob\_252} & 4471/1233 & 20349/6678 & 27 \\
\bottomrule
\end{tabular}
\caption{Per-instance input/output tokens for the multi-call methods. Each
CoE row records eight successful calls before evaluator termination.}
\label{tab:cost-percase-multi}
\end{table}

\FloatBarrier
\section{Datasets and Evaluation Protocol}
\label{app:datasets}
\begin{center}
\scriptsize\setlength{\tabcolsep}{3pt}
\begin{tabular}{lrl}
\toprule
Dataset & $n$ & Role \\
\midrule
NL4Opt & 214 & natural-language LP/MILP formulation \\
IndustryOR & 42 & industry-style problems \\
EasyLP & 545 & easier MAMO subset \\
ComplexLP & 111 & harder MAMO subset \\
NLP4LP & 178 & natural-language-to-LP benchmark \\
ReSocratic & 403 & OptiBench/ReSocratic benchmark \\
\bottomrule
\end{tabular}
\captionof{table}{Survey-cleaned datasets in the main comparison.}
\label{tab:supp-datasets}
\end{center}

Table~\ref{tab:supp-datasets} summarizes the six survey-cleaned datasets used
in the main comparison. The final local evaluation covers NL4Opt, EasyLP,
NLP4LP, and ReSocratic, comprising 1,340 instances and one semantic generation
call per instance. IndustryOR and ComplexLP use the same final ModelIR-based
generation, deterministic verifier, compiler, solver configuration, and
evaluation criterion.

\section{Failure Boundaries and Reproducibility}
\label{app:repro}
Execution outcomes are build failure, built without a recorded objective,
solved with an incorrect objective, and objective-correct. They are not manually
adjudicated semantic labels. The deterministic verifier cannot reconstruct an
omitted constraint or a misread objective, and objective equality does not prove
equality of feasible regions or optimal solution sets.

\paragraph{Experimental configuration.}
The archived experiment metadata specifies GPT-4o-2024-08-06 through the
official OpenAI endpoint; temperature 0; one sample; no provider fallback;
identical-request retries only for transient transport failures; ModelIR
followed by deterministic L1, L2, and compilation; L3, LLM repair, reflection,
and second candidates disabled; and Gurobi with a 60-second limit and otherwise
default parameters. The four-dataset evaluation uses seed 20260728, the
ablation uses seed 20260727, and the bootstrap and COST-10 procedures use seed
20260726. The metadata also includes SHA-256 values for the prompt, schema,
parser, verifier L1/L2, compiler, evaluation, pipeline, and dataset contents.

The archived experiment metadata records the model configuration, prompts,
component hashes, datasets, solver time limit, per-instance outputs, and call
logs, but does not include the original CPU, RAM, operating-system build,
Python version, or exact Gurobi/\texttt{gurobipy} build. We therefore do not
report unverified values for these fields. The implementation requires
\texttt{gurobipy>=10.0.0}.

\clearpage
\bibliographystyle{plainnat}
\bibliography{references}

\end{document}